\begin{document}
\title{Sorting suffixes of a text \\ via its Lyndon Factorization\thanks{Submitted to the Prague Stringology Conference 2013 (PSC 2013)}}

\author{Sabrina Mantaci, Antonio Restivo,\\
Giovanna Rosone and Marinella Sciortino}
\institute{
University of Palermo, Dipartimento di Matematica e Informatica, Italy\\
\email{\{sabrina,restivo,giovanna,mari\}@math.unipa.it}
}

\newcommand{\ppom} {\preceq_\omega}
\newcommand{\pplex}{\leq_{lex}}

\newcommand{\pom} {\prec_\omega}
\newcommand{\plex}{<_{lex}}

\def\lbwt(#1){{bwt}(#1)}
\def\lbwt{{bwt}}

\def\bigT{W}

\def\bwtS#1{\textsf{bwt}_{#1}(\mathcal{S})}
\def\lcpS#1{\textsf{lcp}_{#1}(\mathcal{S})} 
\def\bwtw#1{\textsf{bwt}_{#1}({w})}
\def\lcpw#1{\textsf{LCP}_{#1}({w})}
\def\rank{\textsf{rank}}
\def\bigS{\mathcal{S}}
\def\BCR{\texttt{BCR}}
\def\BWT{BWT} 
\def\LCP{LCP} 
\def\SA{SA} 
\def\GSA{GSA} 
\def\BCRLCP{\texttt{extLCP}}
\def\EMBWT{\texttt{BCRext} }
\def\EMBWTpp{\texttt{BCRext++} }
\def\BWTE{\texttt{bwte}}
\def\sort#1{\textrm{sort}(#1)}

\newcommand{\fbwt}{{\mathcal B}{\mathcal W}{\mathcal T}}

\maketitle

\begin{abstract}
The process of sorting the suffixes of a text plays a fundamental role in Text Algorithms. They are used for instance in the constructions of the Burrows-Wheeler transform and the suffix array, widely used in several fields of Computer Science. For this reason, several recent researches have been devoted to finding new strategies to obtain effective methods for such a sorting. In this paper we introduce a new methodology in which an important role is played by the Lyndon factorization, so that the local suffixes inside factors detected by this factorization keep their mutual order when extended to the suffixes of the whole word. This property suggests a versatile technique that easily can be adapted to different implementative scenarios.
\end{abstract}

\begin{keywords}
Sorting Suffixes, BWT, Suffix Array, Lyndon Words, Lyndon Factorization
\end{keywords}

\section{Introduction}

The sorting of the suffixes of a text plays a fundamental role in Text Algorithms with several applications in many areas of Computer Science and Bioinformatics.
For instance, it is a fundamental step, in implicit or explicit way, for the construction of the suffix array ($SA$) and the Burrows-Wheeler Transform ($bwt$).
The $SA$, introduced in 1990 (cf. \cite{Manber:1990}), is a sorted array of all suffixes of a string, where the suffixes are identify by using their positions in the string. Several strategies that privilege the efficiency of the running time or the low memory consumption have been widely investigated (cf. \cite{Puglisi:2007,GrossiSurvey2011}).
The $bwt$, introduced in $1994$ (cf. \cite{bwt94}), permutes the letters of a text according to the sorting of its cyclic rotations, making the text more compressible (cf. \cite{bookBWTAdjeroh:2008}). A recent survey on the combinatorial properties that guarantee such a compressibility after the application of $bwt$ can be found in \cite{RosoneSciortino_CiE2013} (cf. also \cite{RestivoRosoneTCS2011}).
Moreover, in the last years the $SA$ and the $bwt$, besides being important tools in Data Compression, have found many applications well beyond its original purpose (cf. \cite{AbouelhodaKurtzOhlebusch2002,Ferragina:2000,FerraginaManzini2001,MantaciRRS08,Simpson2010,CoxJakobiRosoneST2012,bookBWTAdjeroh:2008}).

The goal of this paper is to introduce a new strategy for the sorting of the suffixes of a word that opens new scenarios of the computation of the $SA$ and the $bwt$.

Our strategy uses a well known factorization of a word $\bigT$ called the \emph{Lyndon factorization} and is based on a combinatorial property proved in this paper, that allows to sort the suffixes of $\bigT$ (``global suffixes'') by using the sorting of the suffixes inside each block of the decomposition (``local suffixes'').

The Lyndon factorization is based on the fact that any word $\bigT$ can be written uniquely as $\bigT=L_1L_2 \cdots L_k$, where
\begin{itemize}
  \item the sequence $L_1,L_2, \ldots, L_k$ is non-increasing with respect to lexicographic order;
  \item each $L_i$ is strictly less than any of its proper cyclic shift (Lyndon words).
\end{itemize}

This factorization was introduced in \cite{ChenFoxLyndon1958} and a linear time algorithm is due to Duval \cite{Duval1983}.
The intuition that the knowledge of Lyndon factorization of a text can be used for the computation of the suffix array of the text itself has been introduced in \cite{BonomoMRRSDLT2013}. Conversely, a way to find the Lyndon factorization from the suffix array can be found in \cite{HohlwegReutenauer2003}.

If $U$ is a factor of a word $\bigT$ we say that the sorting of the local suffixes of $U$ is \emph{compatible} with the sorting of the global suffixes of $\bigT$ if the mutual order of two local suffixes in $U$ is kept when they are extended as global suffixes. The main theorem in this paper states that if $U$ is a concatenation of consecutive Lyndon factors, then the local suffixes in $U$ are compatible with the global suffixes. This suggests some new algorithmic scenarios for the constructions of the $SA$ and the $bwt$. In fact, by performing the Lyndon factorization of a word $\bigT$ by Duval's algorithm, one does not need to get to the end of the whole word in order to start the decomposition into Lyndon factors. Since our result allow to start the sorting of the local suffixes (compatible with the sorting of the global suffixes) as soon as the first Lyndon word is discovered, this may suggest an online algorithm, that do not require to read the entire word to start sorting. Moreover, the independence of the sorting of the local suffixes inside the different Lyndon factors of a text suggests also a possible parallel strategy to sort the global suffixes of the text itself.

In Section \ref{sec:prel} we give the fundamental notions and results concerning combinatorics on words, the Lyndon factorization, the Burrows-Wheeler transform and the suffix array. In Section \ref{sec:method} we first introduce the notion of global suffix on a text and local suffix inside a factor of the text. Then we prove the compatibility between the ordering of local suffixes and the ordering of global suffixes.
In Section \ref{sec:algo} we describe an algorithm that uses the above result to incrementally construct the $bwt$ of a text. Such a method can be also used to explicitly construct the $SA$ of the text.
In Section \ref{sec:conclusion} we discuss about some possible improvements and developments of our method, including implementations in external memory or in place constructions. Finally, we compare our strategy for sorting suffixes with the method proposed in \cite{FerraginaGagieManzini2012} in which a lightweight computation of the $bwt$ of a text is performed by partitioning it into factors having the same length.

\section{Preliminaries}\label{sec:prel}

Let $\Sigma =\{c_1, c_2, \ldots, c_\sigma\}$ be a finite alphabet with $c_1 < c_2 < \ldots < c_\sigma$.
Given a finite word $\bigT=a_1a_2\cdots a_{n}$, $a_i \in \Sigma$ for $i=1,\ldots,n$,
a \emph{factor} of $\bigT$ is written as $\bigT[i,j] = a_i \cdots a_j$. A factor $\bigT[1,j]$ is called a \emph{prefix}, while a factor $\bigT[i,n]$ is called a \emph{suffix}.
In this paper, we also denote by $suf_\bigT(i)$ as the suffix of $\bigT$ starting from position $i$. We omit $\bigT$ when there is no danger of ambiguity.
We say that $x,y\in \Sigma^*$ are {\em conjugate} (or \emph{cyclic shift}) or \emph{$y$ is a conjugate of $x$} if $x=uv$ and $y=vu$ for some $u,v\in \Sigma^*$. Recall that conjugacy is an equivalent relation.

A {\em Lyndon} word is a primitive word which is also the minimum in its conjugacy class, with respect to the lexicographic order relation.
In \cite{Lothaire:2005,Duval1983}, one can find a linear algorithm that for any word $\bigT \in \Sigma^*$ computes the Lyndon word of its conjugacy class.
We call it the Lyndon word of $\bigT$.
Lyndon words are involved in a nice and important factorization property of words.
\begin{theorem}\cite{ChenFoxLyndon1958}
Every word $\bigT \in \Sigma^+$ has a unique factorization $\bigT=L_1L_2 \cdots L_k$ such that $L_1\geq_{lex} \cdots \geq_{lex} L_k$ is a non-increasing sequence of Lyndon words.
\end{theorem}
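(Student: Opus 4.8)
The plan is to prove existence and uniqueness separately, both resting on the standard characterization that a word $w$ is a Lyndon word if and only if $w <_{lex} s$ for every proper nonempty suffix $s$ of $w$. I would first record (or cite from \cite{Lothaire:2005,Duval1983}) this suffix characterization, since it converts the primitivity-plus-minimality definition into a purely order-theoretic condition that is far easier to manipulate under concatenation.

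For existence, the engine is the \emph{concatenation lemma}: if $u$ and $v$ are Lyndon words with $u <_{lex} v$, then $uv$ is again a Lyndon word. I would prove this via the suffix characterization by checking that $uv$ is strictly smaller than each of its proper suffixes, splitting them into those lying entirely inside $v$ and those of the form $u'v$ with $u'$ a proper nonempty suffix of $u$; the first family reduces to showing $uv <_{lex} v$, and the second to the Lyndon inequality $u <_{lex} u'$. Granting the lemma, existence follows by a merging argument: start from the factorization of $\bigT$ into single letters (each letter is trivially Lyndon), and repeatedly replace any adjacent ascending pair $L_i <_{lex} L_{i+1}$ by its product $L_i L_{i+1}$, which is Lyndon by the lemma. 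Each step strictly decreases the number of factors, so the process halts, and upon halting no ascent remains, i.e. the factor sequence is non-increasing.

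For uniqueness, the key is to pin down one factor intrinsically. I would prove that in any non-increasing Lyndon factorization $\bigT = L_1 \cdots L_k$ the last block $L_k$ is exactly the lexicographically smallest suffix of $\bigT$. Given a suffix $s$, I locate the factor $L_j$ in which it begins and write $s = y\, L_{j+1}\cdots L_k$ with $y$ a nonempty suffix of $L_j$; the Lyndon property of $L_j$ together with the non-increasing chain $L_j \ge_{lex} \cdots \ge_{lex} L_k$ forces $s \ge_{lex} L_k$, with equality attained only by the suffix $L_k$ itself. Since distinct suffixes of a word have distinct lengths and hence are distinct as words, the smallest suffix is unique, so $L_k$ depends only on $\bigT$. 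Two factorizations therefore share their last factor; deleting it and inducting on the length of $\bigT$ yields equality of all factors and of $k$.

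The step I expect to be the main obstacle is the careful bookkeeping in the two order inequalities ($uv <_{lex} v$ in the concatenation lemma and $s \ge_{lex} L_k$ in the uniqueness lemma), where one must distinguish whether the shorter word is a prefix of the longer: when it is, the conclusion comes from a proper prefix being lexicographically smaller, and when it is not, it comes directly from the first differing position. Handling the boundary case $y = L_j$ (a suffix starting exactly at a factor boundary) separately from the case where $y$ is a proper suffix of $L_j$ is the delicate point, but each case is short once the suffix characterization is in hand.
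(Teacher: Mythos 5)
Your argument is correct, but there is nothing in the paper to compare it against: the statement is the classical Chen--Fox--Lyndon theorem, which the paper quotes with the citation \cite{ChenFoxLyndon1958} and does not prove, pointing to \cite{Duval1983,Lothaire:2005} only for the linear-time computation of the factorization. What you have reconstructed is the standard textbook proof: existence by iterated merging of ascending adjacent pairs, starting from the trivial factorization into single letters and using the lemma that if $u$ and $v$ are Lyndon with $u <_{lex} v$ then $uv$ is Lyndon; uniqueness by characterizing the last factor $L_k$ intrinsically as the lexicographically smallest suffix of $\bigT$, followed by induction. Both halves are sound, and your uniqueness lemma is in the same spirit as the inequality the paper does prove later: in Theorem~\ref{th:SufOrder} a proper suffix $z$ of a Lyndon factor $L_m$ is used with $z >_{lex} L_m$. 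One repair to your final paragraph, though: in the concatenation lemma, when $u$ is a proper prefix of $v$, the inequality $uv <_{lex} v$ is \emph{not} a consequence of ``a proper prefix is smaller''; writing $v = uw$ and cancelling the common prefix $u$, it is equivalent to $v <_{lex} w$, which is the suffix characterization applied to $v$ itself. The ``proper prefix is smaller'' observation is instead what closes the boundary case $y = L_j$ of your uniqueness lemma (e.g. when $L_j = L_k$, so that $L_k$ is a proper prefix of $s$), while the non-prefix cases follow from the first differing position plus transitivity. So the tools you list are the right ones but are assigned to the wrong cases; the fix is one line in each place and does not affect the structure of the proof.
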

We call this factorization the \emph{Lyndon factorization}  of a word and it can be computed in linear time (see for instance \cite{Duval1983,Lothaire:2005}).
Duval in \cite{Duval1983} presents two variants of an algorithm of factorization of a word into Lyndon words in time linear in the length of the word.
The first variant of the algorithm uses only three variables for a complete computation and it requires no more than $2n$ comparisons between two letters.
The second one is slightly faster in that sense that it requires no more than $\frac{3n}{2}$ comparisons but it uses an auxiliary storage of size $\frac{n}{2}$.
The basis idea for both these variants is finding each factor of the decomposition of the word $\bigT$ from left to right by eventually reading a long enough prefix of the next Lyndon factor.

Lyndon factorization has been realized also in parallel (cf. \cite{ApostolicoCrochemore1995}) and in external memory (cf. \cite{RohCrochemoreIliopoulosPark:2008}).

One way to define the Burrows-Wheeler Transform ($bwt$) \cite{bwt94} of a string $\bigT$ of length $n$ (although not the most efficient way to compute it) is to construct all $n$ cyclic shifts of $\bigT$ and sort them lexicographically. The output of $bwt$ consists of the pair ($L$, $I$), where $L$ is the sequence of the last character of each rotation in the sorted list and $I$ is an integer denoting the position of the original word in the list.\\
Another more efficient way consists in the concatenating at the the input string $\bigT$ a symbol $\$$ that is smaller than any other letter.
In this case, the $bwt$ is intuitively described as follows:
given a word $\bigT\in \Sigma^*$, $\lbwt(\bigT)$ is a word obtained by sorting the list of the suffixes of $\bigT\$$ and by concatenating the symbols preceding in $\bigT$ each suffix in the sorted list.
In both the cases, it is an invertible transform, i.e., one can recover the original text from its $bwt$.

Note that, in general, the sorting of the conjugates of a word $\bigT$ and the sorting of the suffixes of a word $\bigT\$$ is different, but, as consequence of the properties of Lyndon words, when the word $\bigT$ is the Lyndon word, then the two sorting coincide (cf. \cite[Lemma 12]{GRS2007}).
A study of the combinatorial aspects that connect these two sorting can be found in \cite{BonomoMRRSDLT2013}. In this study an important role is played by the notion of Lyndon word.

Given a text $\bigT$ of length $n$, the suffix array  (SA) for $\bigT$ is an array of integers of range $1$ to $n+1$ specifying the lexicographic ordering of the suffixes of the string $\bigT$. It will be convenient to assume that $\bigT[n+1] = \$$, where $\$$ is smaller than any other letter.
That is, $SA[j] = i$ if and only if $\bigT[i, n+1]$ is the $j$-th suffix of $\bigT$ in ascending lexicographical order.

\begin{figure}[b]
{\scriptsize
$$
\begin{array}{c|c|ccccccccccccc}
SA    & bwt   &       & \multicolumn{12}{c}{Suffixes}                                                            \\
\hline
12    & s     &       & \$    &       &       &       &       &       &       &       &       &       &       &  \\
2     & m     &       & a     & t     & h     & e     & m     & a     & t     & i     & c     & s     & \$    &  \\
7     & m     &       & a     & t     & i     & c     & s     & \$    &       &       &       &       &       &  \\
10    & i     &       & c     & s     & \$    &       &       &       &       &       &       &       &       &  \\
5     & h     &       & e     & m     & a     & t     & i     & c     & s     & \$    &       &       &       &  \\
4     & t     &       & h     & e     & m     & a     & t     & i     & c     & s     & \$    &       &       &  \\
9     & t     &       & i     & c     & s     & \$    &       &       &       &       &       &       &       &  \\
1     & \$    &       & m     & a     & t     & h     & e     & m     & a     & t     & i     & c     & s     & \$ \\
6     & e     &       & m     & a     & t     & i     & c     & s     & \$    &       &       &       &       &  \\
10    & c     &       & s     & \$    &       &       &       &       &       &       &       &       &       &  \\
2     & a     &       & t     & h     & e     & m     & a     & t     & i     & c     & s     & \$    &       &  \\
8     & a     &       & t     & i     & c     & s     & \$    &       &       &       &       &       &       &  \\
\end{array}
$$
}
\caption{The table of the lexicographically sorted suffixes of the word $mathematics\$$ together the $SA(mathematics\$)$ and the $\lbwt(mathematics\$)$.}\label{fig:bwt}
\end{figure}

For instance, if $\bigT=mathematics$ then $\lbwt(\bigT\$)=smmihtt\$ecaa$ and $SA(\bigT\$)=[12, \ 2, \ 7, \ 10, \ 5, \ 4, \ 9, \ 1, \ 6, \ 10, \ 2, \ 8]$. The table obtained by lexicographically sorting all the suffixes of $\bigT\$$ is depicted in Figure \ref{fig:bwt}.

\section{Local and global suffixes of a text}\label{sec:method}

Let $\bigT \in \Sigma^*$ and let $\bigT =L_1L_2 \cdots L_k$ be its Lyndon Factorization.
For each factor $L_r$, we denote by $first(L_r)$ and $last(L_r)$ the position of the first and the last character, respectively, of the factor $L_r$ in $\bigT$. Let $u$ be a factor of $\bigT$. We denote by $suf_u(i) = \bigT[i, last(u)]$ and we call it \emph{local suffix} at the position $i$ with respect to $u$. Note that $suf_\bigT(i) = \bigT[i, n]$ and we call it \emph{global suffix} of $\bigT$ at the position $i$. We write $suf(i)$ instead of $suf_\bigT(i)$ when there is no danger of ambiguity.

\begin{definition}
Let $\bigT$ be a word and let $u$ be a factor of $\bigT$. We say that the sorting of suffixes of $u$ is \emph{compatible} with the sorting of suffixes of $\bigT$ if for all $i,j$ with $first(u) \leq i < j \leq last(u)$,
$$suf_u(i) < suf_u(j) \iff suf(i) < suf(j).$$
\end{definition}

Notice that in general taken an arbitrary factor of a word $\bigT$, the sorting of its suffixes is not compatible with the sorting of the suffixes of $\bigT$. Consider for instance the word $\bigT=abababb$ and its factor $u=ababa$. Then $suf_u(1)=ababa>a=suf_u(5)$ whereas $suf(1)=abababb<abb=suf(5)$.

\begin{theorem}\label{th:SufOrder}
Let $\bigT\in \Sigma^*$ and let $\bigT=L_1L_2 \cdots L_k$ be its Lyndon factorization.
Let $u=L_r L_{r+1}\cdots L_s$.
Then the sorting of the suffixes of $u$ is compatible with the sorting of the suffixes of $\bigT$.
\end{theorem}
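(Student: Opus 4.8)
The plan is to reduce the statement to a single, sharply isolated case and then to attack that case with the defining property of Lyndon words. Throughout, set $m=last(u)$ and assume $s<k$ (if $s=k$ then the tail is empty, $suf(i)=suf_u(i)$, and the statement is trivial). Let $T'=\bigT[m+1,n]=L_{s+1}\cdots L_k$ be the \emph{tail}, so that for every position $i$ with $first(u)\le i\le m$ one has $suf(i)=suf_u(i)\,T'$; that is, the global suffix is obtained from the local one by appending the fixed word $T'$.

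First I would dispose of the generic case. Fix $i<j$ in $u$. Comparing $suf_u(i)$ with $suf_u(j)$, and comparing $suf(i)=suf_u(i)T'$ with $suf(j)=suf_u(j)T'$, can only disagree when one local suffix is a prefix of the other: if neither is a prefix of the other, the first mismatching position is the same for the local and the global comparison, so the two orders coincide. Since $i<j$ forces $|suf_u(i)|>|suf_u(j)|$, the sole remaining possibility is that $suf_u(j)$ is a proper prefix of $suf_u(i)$, in which case locally $suf_u(j)<suf_u(i)$. Hence everything reduces to proving that in this situation the global order agrees, i.e. $suf(i)>suf(j)$.

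Next I would unwind this case combinatorially. That $suf_u(j)$ is a proper prefix of $suf_u(i)$ means exactly that $\bigT[i,m]$ has period $d:=j-i$; writing $v:=\bigT[m-d+1,m]$ for the length-$d$ suffix of $u$, periodicity gives $suf_u(i)=suf_u(j)\,v$, and therefore $suf(i)=suf_u(j)\,v\,T'$ while $suf(j)=suf_u(j)\,T'$. Cancelling the common prefix $suf_u(j)$, the claim becomes the single inequality $v\,T'>T'$, equivalently $suf(q)>suf(m+1)$, where $q=m-d+1$ is a position of $u$ with $q>first(u)$ and $m+1=first(L_{s+1})$. Using the standard equivalence $v\,T'>T'\iff v^{\omega}>T'$, it is in fact enough to compare the purely periodic word $v^{\omega}$ with the tail $T'$.

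The crux, and the step I expect to be the main obstacle, is this last inequality $v^{\omega}>T'$, and here the Lyndon structure must finally be used. The guiding facts are: every proper suffix of a Lyndon word is strictly larger than the word itself; the factors satisfy $L_s\ge L_{s+1}\ge\cdots\ge L_k$, so $T'$ begins with $L_{s+1}\le L_s$; and the Lyndon factorization of a suffix refines the global one, so the first Lyndon factor of $suf(q)$ is at least the factor of $\bigT$ containing $q$, hence at least $L_{s+1}$. When $q$ lies strictly inside $L_s$, $v$ is a proper suffix of $L_s$, whence $v>L_s\ge L_{s+1}$ and the comparison is decided immediately, unless $L_{s+1}$ happens to be a prefix of $v$, which is the delicate subcase. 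The real difficulty is precisely this prefix/overtaking phenomenon (together with the case $d\ge|L_s|$, where $v$ spans several factors): one must rule out that $T'$, after agreeing with $v^{\omega}$ for a while, eventually exceeds it. I would handle this by induction on the number $k-s$ of factors in the tail, peeling off $L_{s+1}$ and invoking the non-increasing order together with the periodicity of $v^{\omega}$ to show that the continuation cannot overtake. The periodicity is essential here, since it is exactly what fails for a generic factor (as the example $\bigT=abababb$, $u=ababa$ already shows).
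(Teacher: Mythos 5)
Your reduction is exactly the paper's: after observing that the local and the global comparisons can only disagree when $suf_u(j)$ is a proper prefix of $suf_u(i)$, both you and the authors are left with the single inequality $suf(m-d+1)>suf(m+1)$, i.e.\ $vT'>T'$. The problem is that you stop there. The step you yourself call ``the crux'' and ``the main obstacle'' is not proved, only announced as provable ``by induction on $k-s$'', and the two subcases you flag as delicate --- the one where $L_{s+1}$ is a prefix of $v$, and the one where $d\ge|L_s|$ so that $v$ spans several Lyndon factors --- are precisely where all the content of the theorem lives; a generic factor $u$ satisfies everything you have actually established. So this is a genuine gap: the Lyndon hypothesis has not yet been used in any binding way.

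The missing idea, which is how the paper closes the argument, is to drop $v$ and $v^{\omega}$ and look at the position $q=m-d+1$ relative to the factorization: the suffix $suf(q)$ begins with $z=\bigT[q,last(L_p)]$, a proper suffix of the \emph{single} Lyndon factor $L_p$ containing position $q-1$ (this is independent of how large $d$ is, which disposes of your ``$v$ spans several factors'' worry). Since $z$ is a proper suffix of the Lyndon word $L_p$ we have $z>L_p$, and since $|z|<|L_p|$ this cannot be a prefix relation: it is witnessed by a character mismatch at some position $t\le|z|$, hence $z\alpha>L_p\beta$ for \emph{all} continuations $\alpha,\beta$. Combined with $L_p\ge L_{s+1}$ this gives $suf(q)=zL_{p+1}\cdots L_k>L_{s+1}\cdots L_k=T'$, which is the inequality you need. (Even here one more line is owed when $L_{s+1}$ is a proper prefix of $L_p$ with the mismatch beyond $|L_{s+1}|$, and when $z$ is empty because $q$ starts a factor --- the paper is silent on both --- but the mismatch-inside-$z$ observation is the lever that prevents the ``overtaking'' you worry about, and it is exactly what your proposal lacks.)
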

\begin{proof}
Let $i$ and $j$ be two indexes with $i<j$ both contained in $u$. We just need to prove that $suf(i)>suf(j) \iff suf_u(i) > suf_u(j)$.
Let $x=\bigT[j, last(L_s)]$ and $y=\bigT[i, i+|x|-1]$.

Suppose that $suf(i)>suf(j)$. Then $y\geq x$ by the definition of lexicographic order. If $y>x$ there is nothing to prove.
If $x=y$, then $suf_u(j)$ is prefix of $suf_u(i)$, so by the definition of lexicographic order $suf_u(i) > suf_u(j)$.

Suppose now that $suf_u(i) > suf_u(j)$. This means that $y\geq x$. If $y>x$ there is nothing to prove. If $x=y$, the index $i+|x|-1$ is in some Lyndon factor $L_m$ with $r \leq m \leq s$, then $L_r\geq L_m\geq L_s$. We denote  $z = \bigT[i+|x|, last(L_m)]$. Then  $suf(i)=xzL_{m+1}\cdots L_k>xL_{s+1}\cdots L_k=suf(j)$, since $z>L_m$ (because $L_m$ is a Lyndon word) and $L_m\geq L_{s+1}$ (since the factorization is a sequence of non increasing factors). \qed
\end{proof}

The above theorem states, in other words, that mutual order of the suffixes of $\bigT$ starting in two positions $i$ and $j$ is the same as the mutual order of the ``local'' suffixes starting in $i$ and $j$ inside the block obtained as concatenation of the consecutive Lyndon factors including $i$ and $j$.

As particular case, the theorem is also true when the two suffixes start in the same Lyndon factor.

We recall that, if $l_1$ and $l_2$ denote two sorted lists of elements taken from any well ordered set, the operation $merge(l_1, l_2)$ consists in obtaining the sorted list of elements in $l_1$ and $l_2$

A consequence of previous theorem is stated in the following proposition.
\begin{proposition}
Let $sort(L_1L_2\cdots L_l)$ and $sort(L_{l+1}L_{l+2}\cdots L_k)$ denote the sorted lists of the suffixes of  $L_1L_2\cdots L_l$ and the suffixes $L_{l+1}L_{l+2}\cdots L_k$, respectively. Then $sort(L_1L_2  \cdots L_k)=merge(sort(L_1L_2\cdots L_l), sort(L_{l+1}L_{l+2}\cdots L_k))$.
\end{proposition}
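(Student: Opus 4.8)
The plan is to treat each of the three lists as a list of starting positions of suffixes, ordered by the corresponding suffix, and to reduce the claim to two applications of Theorem~\ref{th:SufOrder} together with the standard fact that merging two lists that are sorted with respect to a common total order yields the sorted union of their elements. First I would observe that the starting positions of the suffixes of $\bigT=L_1L_2\cdots L_k$ split into two disjoint groups: those lying in $u=L_1\cdots L_l$ (positions $first(L_1),\ldots,last(L_l)$) and those lying in $v=L_{l+1}\cdots L_k$ (positions $first(L_{l+1}),\ldots,n$). Together these account for every position of $\bigT$ exactly once, so as a set of positions the global sorted list $sort(L_1L_2\cdots L_k)$ is the disjoint union of the positions starting in $u$ and those starting in $v$.

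Next I would note that both $u$ and $v$ are concatenations of consecutive Lyndon factors of $\bigT$: taking $r=1,\ s=l$ for $u$ and $r=l+1,\ s=k$ for $v$, each satisfies the hypotheses of Theorem~\ref{th:SufOrder}. Applying that theorem to $u$ gives that for all positions $i,j$ in $u$ one has $suf_u(i)<suf_u(j) \iff suf(i)<suf(j)$; hence the list $sort(L_1L_2\cdots L_l)$, read as a sequence of positions, is exactly the restriction of the global sorted order to the positions lying in $u$. The same argument applied to $v$ shows that $sort(L_{l+1}L_{l+2}\cdots L_k)$, read as a sequence of positions, is the restriction of the global order to the positions lying in $v$.

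Finally I would assemble these two facts: each of $sort(L_1L_2\cdots L_l)$ and $sort(L_{l+1}L_{l+2}\cdots L_k)$ is a subsequence of $sort(L_1L_2\cdots L_k)$ that is itself sorted with respect to the global suffix order of $\bigT$, and between them they cover all positions of $\bigT$ exactly once. Since $merge$ interleaves two lists that are already sorted by a common order into the sorted union of their elements, applying $merge$ with comparison taken to be the global suffix order to these two lists reproduces $sort(L_1L_2\cdots L_k)$, which is the desired equality.

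The main (and essentially only) subtlety I expect concerns what is being compared: the elements of the two input lists are the local suffixes of $u$ and of $v$, which are genuinely different strings from the corresponding global suffixes, so one must be careful that $merge$ compares positions via the global order of $\bigT$ and not via the local strings. Theorem~\ref{th:SufOrder} is precisely what guarantees that this reinterpretation leaves the internal order of each list unchanged, so that both lists remain correctly sorted for the global comparison and the merge is well defined and correct. Beyond clarifying this interpretive point, the argument is routine.
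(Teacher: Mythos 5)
Your proposal is correct and matches the paper's intent: the paper offers no written proof, presenting the proposition simply as ``a consequence of'' Theorem~\ref{th:SufOrder}, and your argument is exactly that consequence spelled out --- two applications of the theorem (to $L_1\cdots L_l$ and to $L_{l+1}\cdots L_k$) plus the standard correctness of merging two lists sorted under a common order. Your remark that the merge must compare positions via the global suffix order of $\bigT$ rather than via the local strings is a worthwhile clarification, not a deviation.
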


This proposition suggests a possible strategy for sorting the list of the suffixes of some word $\bigT$:
\begin{itemize}
\item find the Lyndon decomposition of $\bigT$, $L_1L_2 \cdots L_k$;
\item find the sorted list of the suffixes of $L_1$ and, separately, the sorted list of the suffixes of $L_2$;
\item merge the sorted lists in order to obtain the sorted lists of the suffixes of $L_1L_2$;
\item find the sorted list of the suffixes of $L_3$ and  merge it to the previous sorted list;
\item keep on this way until all the Lyndon factors are processed;
\end{itemize}

This kind of strategy could have several advantages: first of all, one can work online, i.e. one can start sorting suffixes as soon as the first Lyndon factor is individuated. This also allow to integrate the sorting process with the Duval's Algorithm for Lyndon decomposition that outputs Lyndon factors online as well.

The second advantage is that this kind of strategy allows parallelization, since every Lyndon factor can be processed separately for sorting its suffixes. These kind of application would require an efficient algorithm to perform the merging of two sorted lists.

A detailed algorithmic description of this method in order to obtain the $bwt$ of a text is given in next section.

\section{An incremental algorithm to sort suffixes of a text}\label{sec:algo}

In this section we propose an algorithm that incrementally constructs the suffix array $SA$ and the Burrows-Wheeler transform $bwt$ of the text $\bigT$ by using its Lyndon factorization. In particular, here we detail the construction of the $bwt$ but an analogous reasoning can be done in order to obtain the suffix array.
We assume that $L_1L_2 \cdots L_k$ is the Lyndon factorization of the word $\bigT[1,n]$.
So  $L_1 \geq L_2 \geq \ldots \geq L_k$.
Such an hypothesis, although strong, is not restrictive because one can obtain the Lyndon factorization of any word in linear time (cf. \cite{Duval1983,Lothaire:2005}).
As shown in previous section, the hypothesis that $\bigT$ is factorized in Lyndon words suggests to connect the problem to the sorting of the local suffixes of $\bigT$ to the lexicographic sorting of the global suffixes of $\bigT$.

Our algorithm, called {\sc Bwt\_Lynd}, considers the input text $\bigT[1,n]$ as logically partitioned into $k$ blocks, where each block corresponds to a Lyndon word, and computes incrementally the $\lbwt(\bigT\$)$ via $k$ iterations, one per block of $\bigT$.
Each block is examined from right to left so that at iteration $i$ we compute $\lbwt(L_1 \cdots L_{i}\$)$ given $\lbwt(L_1 \cdots L_{i-1}\$)$, $\lbwt(L_i\$)$ and $SA(L_i\$)$.
Remark that the positions in $SA(L_i\$)$ range in $[first(L_i), Last(L_i)+1]$. This means that we sum the amount $|L_1 \cdots L_{i-1}|$ to the values of the usual suffix array of $L_i\$$.

The key point of the algorithm comes from Theorem \ref{th:SufOrder}, because the construction of $\lbwt(L_1 \cdots L_{i}\$)$ from   $\lbwt(L_1 \cdots L_{i-1}\$)$  requires only the insertion of the characters of $L_{i}$ in $\lbwt(L_1 \cdots L_{i-1}\$)$ in the same mutual order as they appear in $\lbwt(L_i\$)$. Note that the character $\$$ that follows $L_i$ is not considered in this operation.

Moreover, such an operation does not modify the mutual order of the characters already lying in $\lbwt(L_1 \cdots L_{i-1}\$)$.

For each block $L_i$ with $i$ ranging from $1$ to $k$, the algorithm {\sc Bwt\_Lynd} executes the following steps:
\begin{enumerate}
  \item\label{it:bwtSA} Compute the $\lbwt(L_{i}\$)$ and $SA(L_{i}\$)$.
  \item\label{it:a} Compute the counter array $G[1, |L_{i}|+1]$ which stores in $G[j]$ the number of suffixes of the string $L_1 \cdots L_{i-1}\$$ which are lexicographically smaller than the $j$-th suffix of $L_{i}\$$.
  \item\label{it:merge} Merge $\lbwt(L_1 \cdots L_{i-1}\$)$ and $\lbwt(L_{i}\$)$ in order to obtain $\lbwt(L_1 \cdots L_{i-1}L_{i}\$)$.
\end{enumerate}

\begin{example}\label{ex:merge}
Let $\bigT = aabcabbaabaabdabbaaabbdc$.
The Lyndon factorization of $\bigT$ is $L_1L_2L_3$, where $L_1=aabcabb>L_2=aabaabdabb>L_3=aaabbdc$.
Figure \ref{fig:1} illustrates how Step \ref{it:merge} of the algorithm works.
Note that the positions of the suffixes in $L_2\$$ (i.e. in $SA(L_2\$)$)  are shifted of $|L_1|=7$ positions.
Notice that in the algorithm {\sc Bwt\_Lynd} we do not actually compute the sorted list of suffixes, but we show it in Figure \ref{fig:1} to ease the comprehension of the algorithm. Moreover, the algorithm can be simply adapt to compute the suffix array of $\bigT$, so in Figure \ref{fig:1} the suffix arrays are also shown.

\begin{figure}[!htb]
{\scriptsize
  $$
    \begin{array}{c|c|c|l}
    \multicolumn{4}{c}{L_1\$}      \\
           & SA    & bwt   & \mbox{Sorted Suffixes} \\
\hline
           & 8     & \underline{b}     & \$ \\
           & 1     & \$    & aabcabb\$ \\
           & 5     & c     & abb\$ \\
           & 2     & a     & abcabb\$ \\
           & 7     & b     & b\$ \\
           & 6     & a     & bb\$ \\
           & 3     & a     & bcabb\$ \\
           & 4     & b     & cabb\$ \\
           &       &       &  \\
           &       &    &  \\
    \multicolumn{4}{c}{L_2\$}      \\
     G     & SA    & bwt   & \mbox{Sorted Suffixes} \\
\hline
\textbf{0} & \textbf{11+7=18} & \textbf{b} & \textbf{\$} \\
\textbf{0} & \textbf{1+7=8} & \textbf{\$} & \textbf{\underline{aabaabdabb\$}} \\
\textbf{2} & \textbf{4+7=11} & \textbf{b} & \textbf{aabdabb\$} \\
\textbf{2} & \textbf{2+7=9} & \textbf{a} & \textbf{abaabdabb\$} \\
\textbf{2} & \textbf{8+7=15} & \textbf{d} & \textbf{abb\$} \\
\textbf{4} & \textbf{5+7=12} & \textbf{a} & \textbf{abdabb\$} \\
\textbf{4} & \textbf{10+7=17} & \textbf{b} & \textbf{b\$} \\
\textbf{5} & \textbf{3+7=10} & \textbf{a} & \textbf{baabdabb\$} \\
\textbf{5} & \textbf{9+7=16} & \textbf{a} & \textbf{bb\$} \\
\textbf{7} & \textbf{6+7=13} & \textbf{a} & \textbf{bdabb\$} \\
\textbf{8} & \textbf{7+7=14} & \textbf{b} & \textbf{dabb\$} \\
    \end{array}%
\qquad
    \begin{array}{c}
        \\
    \Rightarrow
    \\
    \end{array}%
\qquad
\begin{array}{c|c|l}
    \multicolumn{3}{c}{L_1L_2\$}      \\
SA    & bwt   & \mbox{Sorted Suffixes} \\
\hline
\textbf{18} & \textbf{b} & \textbf{\$} \\
\textit{8} & \textit{\underline{b}} & \textit{\underline{aabaabdabb\$}} \\
1     & \$    & aabcabbaabaabdabb\$ \\
\textbf{11} & \textbf{b} & \textbf{aabdabb\$} \\
\textbf{9} & \textbf{a} & \textbf{abaabdabb\$} \\
\textbf{15} & \textbf{d} & \textbf{abb\$} \\
5     & c     & abbaabaabdabb\$ \\
2     & a     & abcabbaabaabdabb\$ \\
\textbf{12} & \textbf{a} & \textbf{abdabb\$} \\
\textbf{17} & \textbf{b} & \textbf{b\$} \\
7     & b     & baabaabdabb\$ \\
\textbf{10} & \textbf{a} & \textbf{baabdabb\$} \\
\textbf{16} & \textbf{a} & \textbf{bb\$} \\
6     & a     & bbaabaabdabb\$ \\
3     & a     & bcabbaabaabdabb\$ \\
\textbf{13} & \textbf{a} & \textbf{bdabb\$} \\
4     & b     & cabbaabaabdabb\$ \\
\textbf{14} & \textbf{b} & \textbf{dabb\$} \\
\end{array}
    $$
}   
\caption{
Iteration $2$ of the computation of the $bwt$ of the text $\bigT = aabcabb|aabaabdabb|aaabbdc$ on the alphabet $\{a,b,c,d\}$.
The two columns represent the $bwt$s before and after the iteration.
Note that the first row (the underlined letter) in the table relative to $L_1\$$ and the second row (the underlined suffix) in the table relative to $L_2\$$ flow into the second row in the table relative to $L_1L_2\$$.
Indeed, the suffix $aabaabdabb\$$  is preceded by the symbol $b$ in $L_1L_2\$$.
We use distinct style fonts for each Lyndon word.
}\label{fig:1}
\end{figure}
\end{example}

Step \ref{it:bwtSA} can be executed in linear time $O(|L_i|)$, if $\lbwt(L_{i}\$)$ and $SA(L_{i}\$)$ are stored in internal memory (see \cite{Puglisi:2007,GrossiSurvey2011}).

During Step \ref{it:a}, the algorithm uses the functions $C$ and $rank$ described as follows.
For any character $x \in \Sigma$, let $C(u,x)$ denote the number of characters in $u$ that are smaller than $x$, and let $rank(u, x, t)$ denote the number of occurrences of $x$ in $u[1,t]$. Such functions have been introduced in \cite{Ferragina:2000} for the FM-index. For sake of simplicity we can firstly construct the array $A[1, |L_{i}|+1]$ which stores in $A[j]$ the number of suffixes of the string $L_1 \cdots L_{i-1}\$$ which are lexicographically smaller than the suffix of $L_{i}\$$ starting at the position $j$.
Remark that we set $A[1]=0$ because $L_i[1,|L_i|]\$$ has the same rank of $\$$ between the suffixes of $L_1 \cdots L_{i-1}\$$ and it is preceded by the same symbol $L_{i-1}(|L_{i-1}|)$ in $L_1\cdots L_{i-1}L_i\$$. Consequently, in our algorithm considers the suffixes $L_i[1,|L_i|]\$$ and the suffix $\$$ (of the string $L_1 \cdots L_{i-1}\$$) as the same suffix.
It is easy to prove that the value $A[|L_i|+1]$ is $0$.
The array $A$ is computed from the position $|L_i|$ to $2$ by using Proposition \ref{prop:arrayA}.

\begin{proposition}\label{prop:arrayA}
Let $j$ be a integer ranging from $|L_i|$ to $2$ and let $A[j+1]$ be the number of suffixes of $L_1 \cdots L_{i-1}\$$ lexicographically smaller than $L_i[j+1,|L_i|+1]$. Let $c$ be the first symbol of the suffix $L_i[j,|L_i|+1]$. Then, $$A[j]=C(\lbwt(L_1 \cdots L_{i-1}\$),c)+rank(\lbwt(L_1 \cdots L_{i-1}\$),c,A[j+1]).$$
\end{proposition}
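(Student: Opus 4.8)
The plan is to recognise this identity as the standard backward-search (LF-mapping) step of the FM-index, specialised to counting how many suffixes of $V\$$, with $V := L_1\cdots L_{i-1}$, lie below a given query string. Write $B=\lbwt(V\$)$, put $c = L_i[j]$ and $S_{j+1}=L_i[j+1,|L_i|+1]$, so that the string whose rank we want is $S_j = L_i[j,|L_i|+1] = cS_{j+1}$. By definition $A[j]$ counts the suffixes of $V\$$ strictly smaller than $S_j$ and $A[j+1]$ those strictly smaller than $S_{j+1}$.

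First I would partition the suffixes of $V\$$ that are smaller than $S_j=cS_{j+1}$ according to their first letter. A suffix whose first letter is strictly smaller than $c$ is below $S_j$ regardless of its continuation, and there are exactly $C(B,c)$ of these: since $B$ is a permutation of the letters of $V\$$, the quantity $C(B,c)$ equals the number of letters of $V\$$ smaller than $c$, which is precisely the number of suffixes of $V\$$ starting with such a letter. This yields the first summand. A suffix whose first letter exceeds $c$ is above $S_j$, so the only remaining contributions come from suffixes whose first letter equals $c$.

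The core of the argument is counting the latter. Such a suffix has the form $cw$, where $w$ is the suffix of $V\$$ beginning one position to the right, and $cw < cS_{j+1}$ holds exactly when $w < S_{j+1}$. Deleting the leading $c$ therefore sets up a bijection between the suffixes of $V\$$ that start with $c$ and are $<S_j$ and the suffixes $w$ of $V\$$ that are $<S_{j+1}$ and are immediately preceded in $V\$$ by $c$. By the definition of $A[j+1]$, the suffixes smaller than $S_{j+1}$ are exactly those occupying the sorted ranks $1,\dots,A[j+1]$, and by the definition of the Burrows--Wheeler transform the letter immediately preceding the suffix of rank $\ell$ is $B[\ell]$. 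Hence the number of suffixes $<S_{j+1}$ that are preceded by $c$ equals the number of occurrences of $c$ among $B[1],\dots,B[A[j+1]]$, that is $rank(B,c,A[j+1])$. Summing the two contributions gives the stated formula.

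The step I expect to demand the most care is making the bijection above watertight at the boundary. One must check that the tail $w$ is always an honest suffix of $V\$$ --- which holds because a suffix starting with the genuine letter $c\neq\$$ cannot begin at the last position --- and that the full suffix $V\$$ itself, whose preceding letter is fixed by convention to the sentinel $\$\neq c$, never contributes to the $c$-count. Once these degenerate cases are dismissed, the identification of \emph{``preceded by $c$''} with \emph{``BWT entry equal to $c$''} is exact, and the rank formula follows.
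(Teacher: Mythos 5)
Your proof is correct and follows essentially the same route as the paper's: split the suffixes of $L_1\cdots L_{i-1}\$$ smaller than $cL_i[j+1,|L_i|+1]$ by first letter, count those beginning with a smaller letter via $C$, and count those beginning with $c$ by reducing to the occurrences of $c$ in the first $A[j+1]$ positions of the $bwt$. The paper states the second reduction in one sentence without the explicit bijection and boundary checks you supply, so your version is simply a more detailed rendering of the same argument.
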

\begin{proof}

Since $c$ is the first symbol of the suffix $L_i[j,|L_i|+1]$, then $L_i[j,|L_i|+1]=cL_i[j+1,|L_i|+1]$. All the suffixes of $L_1 \cdots L_{i-1}\$$ starting with a symbol smaller than $c$ are lexicographically smaller than $L_i[j,|L_i|+1]$. The number of such suffixes is given by $C(\lbwt(L_1 \cdots L_{i-1}\$),c)$. Let us count now the number of suffixes that starting with $c$ and are smaller than $L_i[j,|L_i|+1]$. This is equivalent to counting how many $c$'s occur in $\lbwt(L_1 \cdots L_{i-1}\$)[1,A[j+1]]$. Such a value is given by $rank(\lbwt(L_1 \cdots L_{i-1}\$),c,A[j+1])$.\qed
\end{proof}

It is easy to verify that we can obtain the array $G$ by using the array $A$ and the suffix array $SA(L_i\$)$,
i.e. $G[i] = A[SA(L_i\$)[i]]$. Note that the array $G$ contains the partial sums of the values of the $gap$ array used in \cite{CrauserF02,FerraginaGagieManzini2012}. However, we could directly compute the array $G$ by using the notion of inverse suffix array $ISA$\footnote{The inverse suffix array $ISA$ of a word $\bigT\$$ is the inverse permutation of $SA$, i.e., $ISA[SA[i]] = i$ for all $i \in [1, |w|+1]$. The value $ISA[j]$ is the lexicographical rank of the suffix starting at the position $j$.}. Step \ref{it:a} could be realized in $O(\sum_{j=1,\ldots,i}{|L_j|})$ time because we can build a data structure supporting $O(1)$ time rank queries over $\lbwt(L_1 \cdots L_{i-1}\$)$. The same time complexity is obtained if the rank queries are executed over $\lbwt(L_i\$)$.

Step \ref{it:merge} uses $G$ to create the new array $\lbwt(L_1 \cdots L_{i}\$)$ by merging $\lbwt(L_{i}\$)$ with the $\lbwt(L_1 \cdots L_{i-1}\$)$ computed at the previous iteration. Such a step implicitly constructs the lexicographically sorted list of suffixes starting in $L_1 \cdots L_{i-1}$ and extending up to end of $L_i$ together with the suffixes of $L_i$. In order to do this we keep the mutual order between the suffixes of $L_1 \cdots L_{i-1}\$$ and $L_i\$$ thanks to Theorem \ref{th:SufOrder}. From the definition of the array $G$, it follows that the first two positions of the array $\lbwt(L_1 \cdots L_{i}\$)$ are the first symbol of $\lbwt(L_{i}\$)$ and the first symbol of $\lbwt(L_1 \cdots L_{i-1}\$)$, respectively.
For $j = 3, \ldots, |L_{i}|$ we copy $G[j]$ values from $\lbwt(L_1 \cdots L_{i-1}\$)$ followed by the value $\lbwt(L_{i}\$)[j]$. It is easy to see that the time complexity of Step \ref{it:merge} is $O(\sum_{j=1,\ldots,i}{|L_j|})$, too.

From the description of the algorithm and by proceeding by induction, one can prove the following proposition.
\begin{proposition}
At the end of the iteration $k$, Algorithm {\sc Bwt\_Lynd} correctly computes $\lbwt(L_1 \cdots L_{k}\$)$. Each iteration $i$ runs in $O(\sum_{j=1,\ldots,i}{|L_j|})$ time. The overall time complexity is $O(k^2M)$, where $M=\max_{i=1,\ldots,k}(|L_i|)$.
\end{proposition}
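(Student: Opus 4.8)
The plan is to establish the three assertions separately: correctness by induction on the iteration index, the per-iteration bound by adding the costs already attributed to Steps 1--3, and the global bound by summing those costs over the $k$ iterations.

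For correctness I would prove the invariant that, at the end of iteration $i$, the array maintained by {\sc Bwt\_Lynd} equals $\lbwt(L_1\cdots L_i\$)$. The base case $i=1$ is immediate, since Step \ref{it:bwtSA} produces $\lbwt(L_1\$)$ directly and there is nothing to merge. For the inductive step I assume $\lbwt(L_1\cdots L_{i-1}\$)$ is correct and show that the lexicographically sorted list of suffixes of $L_1\cdots L_i\$$ is exactly the interleaving of two sublists whose internal orders are already available: the suffixes beginning inside $L_1\cdots L_{i-1}$ and those beginning inside $L_i$. Here the key tool is Theorem \ref{th:SufOrder} applied to the word $L_1\cdots L_i\$$, whose Lyndon factorization is $L_1,\ldots,L_i,\$$ (the sentinel $\$$, being smaller than every letter, forms the last Lyndon factor). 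Taking $u=L_1\cdots L_{i-1}$ shows that the mutual order of the first group is unchanged when its suffixes are extended through $L_i\$$, so the order recorded in $\lbwt(L_1\cdots L_{i-1}\$)$ remains valid; taking $u=L_i$ shows that the mutual order of the second group is the one encoded by $SA(L_i\$)$. One small transfer is needed, namely that the order of the $\$$-terminated suffixes agrees with the order of the $\$$-free local suffixes of the theorem, which holds because $\$$ is the smallest symbol and consistently breaks every prefix tie. It then remains to check that the interleaving dictated by $G$ is correct and that each emitted symbol is the correct preceding character: the former follows from Proposition \ref{prop:arrayA} together with $G[i]=A[SA(L_i\$)[i]]$, and the latter follows because the preceding symbol of a suffix of the first group is untouched, while the suffix $L_i[1,|L_i|]\$$ is identified with $\$$ and inherits the symbol $L_{i-1}(|L_{i-1}|)$, exactly as captured by the remark $A[1]=0$.

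The per-iteration bound then follows by adding the costs already justified for the three steps: Step \ref{it:bwtSA} costs $O(|L_i|)$, while Steps \ref{it:a} and \ref{it:merge} each cost $O(\sum_{j=1,\ldots,i}|L_j|)$, the first by building an $O(1)$-time rank structure over $\lbwt(L_1\cdots L_{i-1}\$)$ and computing $A$ and $G$ with $|L_i|$ constant-time queries, the second by producing a merged list of length $\sum_{j\le i}|L_j|+1$. Since $|L_i|\le\sum_{j\le i}|L_j|$, iteration $i$ runs in $O(\sum_{j=1,\ldots,i}|L_j|)$ time. For the global bound I would bound each partial sum crudely by $\sum_{j\le i}|L_j|\le iM\le kM$, where $M=\max_{i=1,\ldots,k}|L_i|$, and sum over the $k$ iterations to obtain $\sum_{i=1}^{k}O(kM)=O(k^2M)$.

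The only genuinely delicate point is the inductive step of correctness, and within it the claim that extending the suffixes of $L_1\cdots L_{i-1}\$$ through the freshly appended block $L_i\$$ leaves their relative order intact. This is precisely what Theorem \ref{th:SufOrder} supplies once the trailing $\$$ is incorporated into the Lyndon factorization; the remaining bookkeeping, i.e.\ the handling of the sentinel and the identification of $L_i[1,|L_i|]\$$ with $\$$, is routine but must be stated carefully so that no character of the already-computed BWT is displaced or relabelled. The two complexity statements are then a straightforward summation.
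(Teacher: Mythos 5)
Your proposal is correct and takes essentially the same route as the paper, whose own proof is a one-line appeal to induction on the algorithm's description: you carry out that induction explicitly, invoking Theorem~\ref{th:SufOrder} for the stability of the two sublists, Proposition~\ref{prop:arrayA} (via $G$) for the interleaving, and the per-step costs already established in Section~\ref{sec:algo}, then sum them to obtain $O(\sum_{j \le i}|L_j|)$ per iteration and $O(k^2M)$ overall. The extra details you supply --- viewing $\$$ as the final Lyndon factor of $L_1\cdots L_i\$$ so that Theorem~\ref{th:SufOrder} applies to the $\$$-terminated word, and the tie-breaking transfer between $\$$-free and $\$$-terminated suffix orders --- are precisely the bookkeeping the paper leaves implicit, so they refine rather than depart from its argument.
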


\section{Discussions and conclusions}\label{sec:conclusion}

The goal of this paper is to propose a new strategy to compute the $bwt$ and the $SA$ of a text by decomposing it into Lyndon factors and by using the compatibility relation between the sorting of its local and global suffixes. At the moment, the quadratic cost of the algorithm could make it impractical. However, from one hand, in order to improve our algorithm, efficient dynamic data structure for the rank operations and for the insertion operations could be used. Navarro and Nekrich's recent result \cite{NNsoda13} on optimal representations of dynamic sequences shows that one can insert symbols at arbitrary positions and compute the rank function in the optimal time $O(\frac{\log n}{\log \log n})$ within essentially $nH_0(s) + O(n)$ bits of space, for a sequence $s$ of length $n$.
On the other hand, our technique, differently from other approaches in which partitions of the text are performed, is quite versatile so that it easily can be adapted to different implementative scenarios.

For instance, in \cite{FerraginaGagieManzini2012} the authors describe an algorithm, called {\sc bwte}, that logically partitions the input text $\bigT$ of length $n$ into blocks of the same length $m$, i.e. $\bigT = T_{n/m}T_{n/m-1} \cdots  T_1$ and computes incrementally the $bwt$ of $\bigT$ via $n/m$ iterations, one per block of $\bigT$.
Text blocks are examined from right to left so that at iteration $h + 1$, they compute and store on disk $bwt(T_{h+1} \cdots T_1)$ given $bwt(T_{h} \cdots T_1)$. In this case the mutual order of the suffixes in each block depends on the order of the suffixes of the next block. Our algorithm {\sc Bwt\_Lynd} builds the $\lbwt$ of a text or its $SA$ by scanning the text \emph{from left to right} and it could run online, i.e. while the Lyndon factorization is realized. One of the advantages is that adding new text to the end does not imply to compute again the mutual order of the suffixes of the text analyzed before, unless for the suffixes of the last Lyndon word that could change by adding characters on the right. Moreover, as described in the previous section, the text could be partitioned into several sequences of consecutive blocks of Lyndon words, and the algorithm can be applied \emph{in parallel} to each of those sequences. Furthermore, also the Lyndon factorization can be performed in parallel, as shown in \cite{ApostolicoCrochemore1995}. Alternatively, since we read each symbol only once, also an in-place computation could be suggested by the strategy proposed in \cite{CGKL_CPM2013}, in which the space occupied by text $\bigT$ is used to store the $bwt(\bigT)$.

Finally, in the description of the algorithm we did not mention the used workspace. In fact, it could depend on the time-space trade-off that one should reach. For instance, the methodologies used in \cite{BauerCoxRosoneTCS2013,FerraginaGagieManzini2012} where disk data access are executed only via sequential scans could be adapted in order to obtain a lightweight version of the algorithm. An external memory algorithm for the Lyndon factorization can be found in \cite{RohCrochemoreIliopoulosPark:2008}.
We remark that that the method proposed in \cite{FerraginaGagieManzini2012} could be integrated into {\sc Bwt\_Lynd} in the sense that one can apply {\sc bwte} to compute at each iteration the $bwt$ and the $SA$ of each block of the Lyndon partition.

In conclusion, our method seems lay out the path towards a new approach to the problem of sorting the suffixes of a text in which partitioning the text by using its combinatorial properties allows it to tackle the problem in local portions of the text in order to extend efficiently solutions to a global dimension.

\bibliographystyle{psc}          


\end{document}